
\documentclass[conference]{IEEEtran}
\IEEEoverridecommandlockouts
\usepackage{cite}
\usepackage{amsmath,amssymb,amsfonts,amsthm}
\usepackage{algorithmic}
\usepackage{graphicx}
\usepackage{textcomp}
\usepackage{xcolor}
\usepackage{url}
\usepackage{lipsum}
\usepackage{multirow}
\usepackage[normalem]{ulem}
\usepackage{bbding}
\usepackage{pifont}
\usepackage{wasysym}
\usepackage{bbm}
\usepackage{dsfont}
\usepackage[ruled,lined,boxed]{algorithm2e}
\usepackage{mathtools, nccmath}
\usepackage{makecell}


\DeclareMathAlphabet{\mathbcal}{OMS}{cmsy}{b}{n}
\def\BibTeX{{\rm B\kern-.05em{\sc i\kern-.025em b}\kern-.08em
		T\kern-.1667em\lower.7ex\hbox{E}\kern-.125emX}}

\newtheorem{prop}{Proposition}


\begin{document}
	
	\title{Low-Complexity Near-Field Localization with XL-MIMO Sectored Uniform Circular Arrays\\
	}
	\author{
		\IEEEauthorblockN{
			Shicong~Liu~
			and Xianghao~Yu
		}
		\IEEEauthorblockA{
			Department of Electrical Engineering, City University of Hong Kong, Hong Kong}
		Email: sc.liu@my.cityu.edu.hk, alex.yu@cityu.edu.hk
	}
	
	\maketitle
	
	\begin{abstract}
		Rapid advancement of antenna technology catalyses the popularization of extremely large-scale multiple-input multiple-output (XL-MIMO) antenna arrays, which pose unique challenges for localization with the inescapable near-field effect. In this paper, we propose an efficient near-field localization algorithm by leveraging a sectored uniform circular array (sUCA). In particular, we first customize a backprojection algorithm in the polar coordinate for sUCA-enabled near-field localization, which facilitates the target detection procedure. We then analyze the resolutions in both angular and distance domains via deriving the interval of zero-crossing points, and further unravel the minimum required number of antennas to eliminate grating lobes. The proposed localization method is finally implemented using fast Fourier transform (FFT) to reduce computational complexity. Simulation results verify the resolution analysis and demonstrate that the proposed method remarkably outperforms conventional localization algorithms in terms of localization accuracy. Moreover, the low-complexity FFT implementation achieves an average runtime that is hundreds of times faster when large numbers of antenna elements are employed.
	\end{abstract}
	
	
	\section{Introduction}
	\bstctlcite{IEEEexample:BSTcontrol}
	In recent years, the advancement of antenna technology has promoted the miniaturization and integration process of extremely large-scale multiple-input multiple-output (XL-MIMO) antenna arrays, which is envisioned as one of the core techniques in the sixth-generation (6G) wireless communication systems~\cite{10068140}. The vast aperture of XL-MIMO arrays not only improves spatial multiplexing and beamfocusing gain for communications~\cite{10379539} 
	but also enables advanced sensing capabilities such as target localization and tracking~\cite{10286475}. However, the increased aperture drastically extends the near-field region, where the planar wavefront assumptions no longer hold. In addition to the angular domain, 
	another spatial degree of freedom (DoF), i.e., distance, needs to be taken into consideration, which poses substantial challenges for localization problems. 
	
	To tackle the near-field localization problem, methods based on uniform linear arrays (ULAs) have been extensively studied\cite{ren2023sensingassisted,10379539,10149471,9707730,10286475}. The conventional multiple signal classification (MUSIC) algorithm was applied to locate near-field user equipment (UE) and scatterers~\cite{ren2023sensingassisted}, while a successive grid search over two DoFs, namely, distance and angle, is necessitated in near-field localization. 
	To reduce the computational complexity, Pan {\it et al.} put forward to decouple the distance and angular parameters and estimate them separately~\cite{10149471}. However, 
	the cubically increased complexity of singular value decomposition (SVD) still cannot be alleviated. A radar sensing method was also utilized to detect the target with orthogonal frequency division multiplexing (OFDM) waveform~\cite{9707730}, which significantly reduces the computational complexity. 
	
	Nevertheless, the ambiguity function of radar sensing algorithms on ULAs shows significant asymmetry with varying impinge azimuth angles, which causes inconsistent resolution on the angular domain~\cite{1589932}, and can easily lead to localization error for objects in close proximity.
	To tackle this issue, uniform circular arrays (UCAs) were identified to be capable of providing azimuthally consistent resolution~\cite{9520393}, which makes them excellent candidates for localization applications. 

Equipped with UCAs, existing works achieved localization by codebook design for the beam training procedure. 
A non-uniform angular and distance grid planning scheme with slight overlapping was proposed in~\cite{10005200}, which produces an elliptical focusing pattern in the near-field region. Later on, an alternate sampling method in both angle and distance domains was proposed to design the near-field concentric-ring codebook~\cite{10243590}. 
Yet, owing to the constraint of codebook size, these methods are trading localization accuracy for efficiency. 
Furthermore, it is unlikely for UCAs to serve UEs through all radiating elements. In practice, sectors of radiating elements are commonly deployed, which have been less studied in the near field. 
Hence, leveraging sectored UCAs, accurate localization algsorithms with low computational complexity in the near field remains an open issue.

In this paper, we address the near-field localization issue under the sectored UCA (sUCA) antenna configuration and OFDM waveform. We propose to estimate the locations of UE and scatterers 
by developing a backprojection algorithm in the polar coordinate. The resolution of the proposed algorithm in both distance and angular domains is further analyzed with accurate approximations.
We also derive the minimum number of antennas required to achieve the aforementioned resolutions. 
It is proved that both the resolution and minimum number of antennas are proportional to the carrier frequency, the radius of the sUCA, and the sector angular span. Finally, 
the computational complexity of the proposed algorithm is reduced by implementing the proposed method with fast Fourier transform (FFT). Numerical results are consistent with the performance analysis and demonstrate that our proposed method significantly outperforms the conventional MUSIC algorithm in both localization accuracy and computational complexity.

\par {\it Notations}: We use normal-face letters to denote scalars and lowercase (uppercase) boldface letters to denote column vectors (matrices). The $k$-th row vector and the $m$-th column vector of matrix ${\bf H}\in\mathbb{C}^{K\times M}$ are denoted as ${\bf H}[{k,:}]$ and ${\bf H}[{:,m}]$, respectively, and the $n$-th element in the vector $\bf h$ is denoted by ${\bf h}[n]$. $\{{\bf H}_n\}_{n=1}^N$ denotes a matrix set with the cardinality of $N$. The superscripts $(\cdot)^{T}$, $(\cdot)^{\rm *}$, and $(\cdot)^{H}$ represent the transpose, conjugate, and conjugate transpose operators, respectively. $\mathcal{CN}(\mu,\sigma^2)$ denotes the complex Gaussian distribution with mean $\mu$ and standard deviation $\sigma$. $\mathbb{E}[\cdot]$ and $\lceil\cdot\rceil$ denote the statistical expectation and ceiling operators, respectively. 
The imaginary unit is represented as $\jmath$ such that $\jmath^2=-1$.
\section{System Model}
\begin{figure}
	\centering
	\includegraphics[width=0.35\textwidth]{./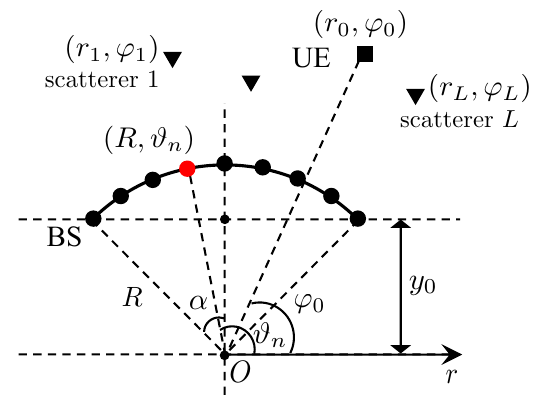}
	\caption{The considered near-field communication scenario.}
	\label{fig:sysmodel}
\end{figure}
As shown in Fig.~\ref{fig:sysmodel}, a single antenna UE is communicating with a BS in its near-field region, where the BS is equipped with a continuous sector of a UCA spanning radians $2\alpha<\pi$ with $N$ antenna elements. The radius of BS is $R$, and the center of the virtual circle of the sUCA is denoted $O$. The vertical distance between the leftmost (or rightmost) element on the sUCA and the $r$-axis of the polar coordinate is denoted by $y_0 = R\cos\alpha$.
We consider an uplink localization scenario, where OFDM wideband modulation with $K$ subcarriers and $f_{\rm SCS}$ subcarrier spacing (SCS) is adopted, and therefore the bandwidth is $f_{\rm BW} = Kf_{\rm SCS}$. The received signal on the $k$-th ($k\in\{1,\cdots,K\}$) subcarrier at the BS can be expressed as
\begin{equation}
	{\bf y}_k = {\bf h}_k{s}_k+{\bf n}_k,
	\label{eq:sysmodel}
\end{equation}
where $s_k$ and ${\bf n}_k\sim {\mathcal{CN}}(0,\sigma_n^2)$ represent the pilot signal and the additive white Gaussian noise (AWGN) on the $k$-th subcarrier, respectively, and $\sigma_n^2$ denotes the noise power. ${\bf h}_k\in\mathbb{C}^{N\times 1}$ is the uplink channel vector and can be modeled using the spherical wave transmission model~\cite{8755476} as
\begin{equation}
	\begin{aligned}
		{\bf h}_k[n] &= \frac{1}{\sqrt{NL}}\sum_{\ell=0}^L  \frac{\alpha_\ell e^{\jmath \frac{2\pi}{c}f_k \left( d_\ell+ d_{\ell,n} \right)}}{ d_{\ell} d_{\ell,n}}\\
		&=\sum_{\ell=0}^L  \frac{\tilde{\alpha}_{k,\ell} e^{\jmath \frac{2\pi}{c}f_k   d_{\ell,n} }}{ d_{\ell,n}},
		\label{eq:channel}
	\end{aligned}
\end{equation}
where $c$ is the light speed in vacuum, $L$ is the number of scatterers, and $\alpha_\ell\sim\mathcal{CN}(0,1)$ is the complex attenuation of the $\ell$-th path. $\tilde{\alpha}_{k,0} = 1$ and $\tilde{\alpha}_{k,\ell} = \frac{\alpha_\ell e^{\jmath \frac{2\pi}{c}f_kd_\ell}}{d_\ell\sqrt{NL}}$ ($\ell\in\{1,\cdots, L\}$) denote the effective channel attenuation for line-of-sight (LoS) and non-LoS (NLoS) paths, respectively. Furthermore, $d_\ell$ denotes the distance from the UE to the $\ell$-th scatterer, and the distance from the $\ell$-th scatterer to the $n$-th element on the BS array is defined by
\begin{equation}
	d_{\ell,n} = \sqrt{r_{\ell}^2+R^2-2Rr_{\ell} \cos\left(  \vartheta_n-\varphi_\ell \right) }.
	\label{eq:distance1}
\end{equation}
$f_k = f_c+ \frac{k}{K} f_{\rm SCS}$ denotes the frequency of the $k$-th subcarrier, while $f_c$ is the carrier frequency.

\section{Proposed Near-Field Localization Method}
\label{sec:prop}
In this section, we propose a backprojection algorithm to reconstruct the near-field scattering environment for localization. In particular, by emulating the reverse propagation of the received signal~\eqref{eq:sysmodel} from the BS back to the locations of the UE and scatterers, location coordinates can be obtained by detecting the peaks in the reconstruction grids. 
Without loss of generality, we assume that the pilot symbols satisfy $\vert s_k \vert = 1$ for all $k\in\{1,\cdots,K\}$, and the emulated signal, also known as the \textit{ambiguity function}, at arbitrary coordinate $(r,\varphi)$ at the $k$-th subcarrier is given by
\begin{equation}
	\begin{aligned}
		F_{k}(r,\varphi)
		={}& \sum_{n=1}^{N} {\bf y}_k[n]s_k^* e^{-j\frac{2\pi}{c}f_k d_n}\\
		={}& \sum_{\ell=0}^{L}\sum_{n=1}^{N}\frac{\tilde{\alpha}_{k,\ell}e^{\jmath \frac{2\pi}{c}f_k \left( d_{\ell,n}-d_n \right) }}{ d_{\ell,n}} + \tilde{\bf n}_k,
	\end{aligned}
	\label{eq:tiori}
\end{equation}
where $\tilde{\bf n}_k = \sum_{n=1}^N {\bf n}_k s_k^* e^{-j\frac{2\pi}{c}f_k d_n}$ is the effective noise, and $d_n$ shares a similar form as~\eqref{eq:distance1} denoting the distance from the polar coordinate $(r,\varphi)$ to the $n$-th antenna element. The multiplied phase term $e^{-j\frac{2\pi}{c}f_k d_n}$ in~\eqref{eq:tiori} denotes exactly the phase shift during the inverse transmission from the BS to $(r,\varphi)$, which emulates the reverse propagation of electromagnetic (EM) wave. 
The binomial expansion of distance term $d_{\ell,n}$ in~\eqref{eq:distance1} is
\begin{equation}
	\begin{aligned}
		d_{\ell,n}\approx r_\ell +\frac{R^2}{2r_\ell}-R\cos\left( \vartheta_n - \varphi_\ell \right),
		\label{eq:binomial}
	\end{aligned}
\end{equation}
and the equality asymptotically holds when $r_\ell^2 \gg  R^2-2 R r_\ell \cos\left( \vartheta_n - \varphi_\ell \right) $. Substituting the binomial expansion~\eqref{eq:binomial} into both distance terms, i.e, $d_{\ell,n}$ and $d_{n}$, in
\eqref{eq:tiori} yields
\begin{align}
	&F_k\left(r,\varphi \mid r_{\ell},\varphi_\ell\right)\notag\\
	{\approx}&
	\sum_{\ell=0}^{L}\sum_{n=1}^{N}\!\frac{e^{-\jmath \frac{2\pi}{c}f_k R \left( \cos\left( \vartheta_n - \varphi_\ell \right)-\cos\left( \vartheta_n - \varphi \right) \right) }}{ r_{\ell}+\frac{R^2}{2 r_{\ell}}-R\cos\left( \vartheta_n-\varphi_\ell \right) }+ \tilde{\bf n}_k\notag\\
	=&\sum_{\ell=0}^{L}\sum_{n=1}^{N}\!\frac{e^{\jmath \frac{4\pi}{c}f_k R\sin\left(\frac{\varphi_\ell-\varphi}{2} \right)\sin\left( \vartheta_n -\frac{\varphi_\ell+\varphi}{2}\right)  }}{ r_{\ell} +\frac{R^2}{2 r_{\ell}}-R\cos\left( \vartheta_n-\varphi_\ell \right) }+ \tilde{\bf n}_k\notag\\
	\triangleq&\sum_{\ell=0}^{L} F_k^{(\ell)}(r,\varphi)+ \tilde{\bf n}_k,
	\label{eq:discrete_form}
\end{align}
from which we need to extract the localization information. Next, we present a favorable property of the constructed function $F_k^{(\ell)}(\cdot)$ in~\eqref{eq:discrete_form}, which facilitates efficient near-field localization.
\begin{prop}
	\label{prop:1}
	There exists only one unique location coordinate $(r,\varphi)$ that maximizes the power of $F_k^{(\ell)}(r,\varphi)$ at $(r,\varphi) = (r_\ell,\varphi_\ell)$.
\end{prop}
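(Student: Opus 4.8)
The plan is to view $F_k^{(\ell)}$ in \eqref{eq:discrete_form} as a weighted sum of unit-modulus phasors with strictly positive weights, so that the magnitude can be controlled by the triangle inequality and equality characterises the maximiser. First I would write
\[
F_k^{(\ell)}(r,\varphi)=\sum_{n=1}^{N} w_n\, e^{\jmath\Phi_n(r,\varphi)},\qquad w_n=\Bigl(r_\ell+\tfrac{R^2}{2r_\ell}-R\cos(\vartheta_n-\varphi_\ell)\Bigr)^{-1},
\]
noting that the denominator is exactly the approximated distance $d_{\ell,n}$ of \eqref{eq:binomial}, so $w_n>0$, and crucially that $w_n$ does not depend on the reconstruction point $(r,\varphi)$. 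Since the weights are positive, the triangle inequality gives $\bigl|F_k^{(\ell)}(r,\varphi)\bigr|\le\sum_{n}w_n$, with equality if and only if all phasors align, i.e.\ $\Phi_n\equiv\Phi_m\pmod{2\pi}$ for every antenna pair.

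Next I would verify attainment at the true location. From \eqref{eq:discrete_form} the phase carries the factor $\sin\!\bigl(\tfrac{\varphi_\ell-\varphi}{2}\bigr)$, which vanishes at $\varphi=\varphi_\ell$; hence $\Phi_n\equiv 0$ for all $n$, and $\bigl|F_k^{(\ell)}(r_\ell,\varphi_\ell)\bigr|=\sum_n w_n$ meets the upper bound, making $(r_\ell,\varphi_\ell)$ a global maximiser.

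The crux, and what I expect to be the main obstacle, is uniqueness. For the angular coordinate, equality in the triangle inequality forces the differences $\Phi_n-\Phi_m\propto\sin\!\bigl(\tfrac{\varphi_\ell-\varphi}{2}\bigr)\bigl[\sin(\vartheta_n-\tfrac{\varphi_\ell+\varphi}{2})-\sin(\vartheta_m-\tfrac{\varphi_\ell+\varphi}{2})\bigr]$ to vanish modulo $2\pi$ for all $n,m$. Because the $\vartheta_n$ sweep a continuous arc of span $2\alpha<\pi$, the bracketed term is genuinely non-constant in $n$, so the residual phase can be annihilated only by $\sin\!\bigl(\tfrac{\varphi_\ell-\varphi}{2}\bigr)=0$, which on the principal angular range forces $\varphi=\varphi_\ell$. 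The delicate part is excluding spurious coincidences produced by $2\pi$ phase wrapping (grating lobes): I would rule these out using the sector-span bound together with a dense-enough sampling of the arc, i.e.\ the minimum-antenna condition that the paper derives subsequently, guaranteeing that no secondary direction can realign all phasors.

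Finally, the radial coordinate needs separate care, and here I would be most cautious. Under the binomial approximation the range-dependent part of $d_{\ell,n}-d_n$ is common to all $n$, so it contributes only a global phase and is invisible to $\bigl|F_k^{(\ell)}\bigr|$; thus \eqref{eq:discrete_form} alone cannot pin the range. To recover radial uniqueness I would fall back on the exact phase $\tfrac{2\pi}{c}f_k(d_{\ell,n}-d_n)$ of \eqref{eq:tiori}: perfect alignment requires $d_{\ell,n}=d_n$ for every $n$, and since three non-collinear array elements fix a point by trilateration, this forces $(r,\varphi)=(r_\ell,\varphi_\ell)$ and closes the argument; alternatively the range is resolved through the subcarrier (bandwidth) dimension of the full reconstruction.
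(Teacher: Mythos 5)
Your proposal is correct and, at its core, runs on the same mechanism as the paper's proof: the paper applies the Cauchy--Schwarz inequality to the sum in \eqref{eq:discrete_form}, and since every numerator there has unit modulus while the denominators $\hat{d}_{\ell,n}$ do not depend on $(r,\varphi)$, that bound collapses to exactly your triangle-inequality bound $\bigl\vert F_k^{(\ell)}\bigr\vert \le \sum_n w_n$, with equality if and only if all phasors align. Where you genuinely go beyond the paper is in the two caveats you flag, and both are real. First, the paper's proof ends with the bare assertion that equality ``holds if and only if $(r,\varphi)=(r_\ell,\varphi_\ell)$,'' giving no argument that rules out phase-wrapped (grating-lobe) realignments; you correctly identify that uniqueness in $\varphi$ needs the non-constancy of $\sin\bigl(\vartheta_n-\tfrac{\varphi_\ell+\varphi}{2}\bigr)$ across the arc together with an antenna-density condition of the type the paper only derives later in \eqref{eq:minantenna}. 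Second, and more importantly, you observe that under the binomial approximation the $r$-dependent part of the phase is common to all $n$, hence invisible to $\bigl\vert F_k^{(\ell)}\bigr\vert$, so the single-subcarrier ambiguity function cannot pin down the range at all. This is a genuine gap in the paper's own proof of the proposition as stated --- one the paper implicitly concedes in its distance-resolution section, where it shows $\bigl\vert \hat{F}_k^{(\ell)}(r \mid r_\ell,\varphi=\varphi_\ell)\bigr\vert$ is constant in $r$ and that range resolution only emerges after summing over $K$ subcarriers. Your fallback to the exact (unapproximated) phases or to the bandwidth dimension is the right repair; the only refinement needed is that exact-phase alignment forces $d_{\ell,n}-d_n$ to be constant modulo $\lambda_k$ rather than identically zero, so the trilateration step, too, must invoke the no-wrapping condition before it closes the argument.
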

\begin{proof}
	The power of the $\ell$-th multi-path component $F_k^{(\ell)}(r,\varphi)$ can be reformulated by Cauchy–Schwarz inequality as
	
	\begin{align}
		&\left\vert F_k^{(\ell)}(r,\varphi)\right\vert^2
		= \left\vert \sum_{n=1}^{N} \frac{e^{\jmath\frac{4\pi}{c}f_k R\left( \cos\left( \vartheta_n - \varphi_\ell \right)-\cos\left( \vartheta_n - \varphi \right) \right) }}{ r_{\ell}+\frac{R^2}{2 r_{\ell}}-R\cos\left( \vartheta_n-\varphi_\ell \right)} \right\vert^2 \notag\\
		\overset{(a)}{\leq}& \sum_{n=1}^{N}  \frac{\left\vert e^{\jmath\frac{4\pi}{c}f_k R \cos\left( \vartheta_n - \varphi_\ell \right) }\right\vert^2\!\! }{\hat{d}_{\ell,n}}
		\sum_{n=1}^{N}  \frac{\left\vert e^{-\jmath\frac{4\pi}{c}f_k R \cos\left( \vartheta_n - \varphi \right) }\right\vert^2\!\! }{\hat{d}_{\ell,n}},\label{eq:rec_pwr}
	\end{align}
	where $\hat{d}_{\ell,n}$ denotes the denominator of $F_k^{(\ell)}(\cdot)$ for notational brevity. 
	The equality of $(a)$ holds if and only if $F_k^{(\ell)}(r,\varphi)$ as $(r,\varphi) = (r_\ell,\varphi_\ell)$, which proves the uniqueness of the power peak.
\end{proof}
Similar to ULA antenna configurations, we can also prove the asymptotic vanishment at other coordinates $(r,\varphi) \neq (r_\ell,\varphi_\ell)$ in $F_k^{(\ell)}(r,\varphi)$ with the presence of $f_k\rightarrow\infty$ and $N\rightarrow\infty$~\cite{liu2024sensingenhanced}, which indicates that the location coordinates can be extracted through a $2$D grid search. However, there is one additional key benefit of adopting the backprojection algorithm on the sUCA. Compared to the ULA case, it is much easier to detect peaks in the reconstruction grid, as exemplified in Fig.~\ref{fig:demo}. In particular, the detection of two targets in Fig.~\ref{fig:demo}(a) requires a 2D grid search with sufficiently large numbers of grids and sophisticated methods. In contrast, with the proposed backprojection algorithm for the sUCA, we can locate the target in the angular domain readily by a simple sum-up operation in the distance dimension and implement a $1$D search for efficient localization. This again verifies the motivation and effectiveness of developing the backprojection algorithm for sUCA-enabled near-field localization.

\begin{figure}
	\centering
	\includegraphics[width=0.4\textwidth]{./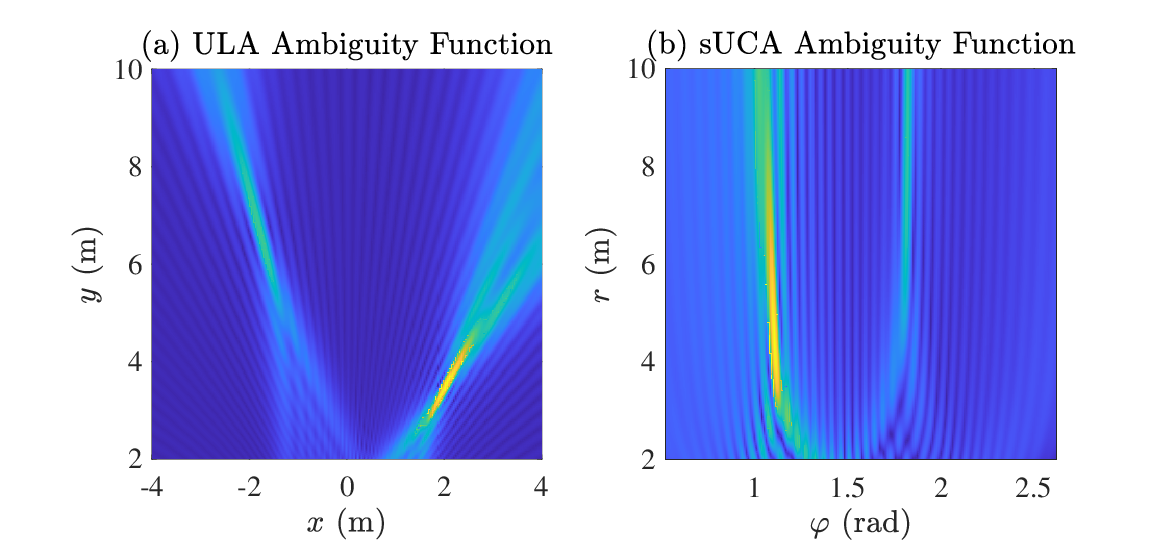}
	\caption{The ambiguity functions of (a) ULA and (b) sUCA with two targets in the near-field region, the (c) $x$-axis projection, and (d) $\varphi$-axis projection of the ambiguity functions.}
	\label{fig:demo}
\end{figure}

\section{Resolution Analysis of the Proposed Method}
Given that the localization parameters can be obtained in a maximum likelihood manner, as presented in Section~\ref{sec:prop}, we need to further derive the resolution in both angular and distance domains to validate the reliability of the proposed algorithm. In this section, we analyze the main lobe widths in the angular and distance domains, respectively, which are directly related to the resolution of localization~\cite{10.5555/1111206}. We also provide the minimum number of antenna elements to achieve the given resolutions.
\subsection{Angular Resolution}
\label{sec:ang}
We first analyze the reconstruction pattern in the angular domain. Let $\hat{F}_k^{(\ell)}\left(\varphi \mid r = r_{\ell},\varphi_\ell\right)$ be the integral form approximation~\cite{9536436} of the discrete sum of the $\ell$-th multi-path component $F_k^{(\ell)}(r,\varphi)$ in~\eqref{eq:discrete_form}, where the approximation error approaches $0$ when $N\rightarrow\infty$. 
The angular pattern at $r = r_{\ell}$ is then derived as
\begingroup
\allowdisplaybreaks
\begin{align}
	&\left\vert\hat{F}_k^{(\ell)}\left(\varphi \mid r = r_{\ell},\varphi_\ell\right)\right\vert \notag\\
	={}&\left\vert\int_{\frac{\pi}{2}-\alpha}^{\frac{\pi}{2}+\alpha} \frac{e^{\jmath \frac{2\pi}{c}f_k R \left[ \cos\left( \theta-\varphi_\ell \right) - \cos\left( \theta-\varphi \right) \right]  }}{ d_{\ell}(\theta)}~{\rm d}\theta\right\vert \notag\\
	\overset{(b)}{\approx}{}&\left\vert\int_{\frac{\pi}{2}-\alpha}^{\frac{\pi}{2}+\alpha} \frac{e^{\jmath z \sin\left( \theta -\frac{\varphi_\ell+\varphi}{2}\right)  }}{ {r}_{\ell}}~{\rm d}\theta\right\vert\label{eq:angle}\\
	={}&\frac{1}{r_\ell}\left\vert J_0\!\left( \frac{\pi-\varphi_\ell-\varphi}{2}\!\!+\!\alpha,z \right)\!\!+\!\! J_0^*\!\left( \frac{\pi-\varphi_\ell-\varphi}{2}\!\!-\!\alpha,z \right) \right\vert,\notag
\end{align}
\endgroup
where $d_\ell(\theta)$ denotes the distance from the $\ell$-th target to the sUCA element at $(R,\theta)$, and $(b)$ holds with the same condition as~\eqref{eq:binomial}. $z=\frac{4\pi}{c}f_k R\sin\left(\frac{\varphi_\ell-\varphi}{2} \right)$, and $J_0\left(w,z\right)$ denotes \textit{the first kind incomplete Bessel function of order zero}, with $w$ being the integral upper bound. Since the properties of incomplete Bessel functions in~\eqref{eq:angle} are not yet clear, we need to find the peaks and zero-crossing points to determine the lobe widths and resolution in the angular domain. 

When $\varphi = \varphi_\ell$, as considered in \textbf{Proposition~\ref{prop:1}}, the phase term in~\eqref{eq:angle} becomes $z\sin\left( \theta -\left({\varphi_\ell+\varphi}\right)/{2}\right) = 0$, which leads to in-phase superposition. The maximum peak of $\left\vert\hat{F}_k^{(\ell)}\left(\varphi \mid r = r_{\ell},\varphi_\ell\right)\right\vert$ is therefore obtained at $\varphi=\varphi_\ell$. However, when we have $\varphi \neq \varphi_\ell$, although it can be proved that the angular pattern will finally vanish to $0$, it is difficult to obtain the distance of adjacent zero-crossing points of incomplete Bessel functions. Let $t = \theta-\pi/2$, the integral in~\eqref{eq:angle} can be given by
\begin{equation}
	\begin{aligned}
		g(\Delta\varphi) &= \int_{\frac{\pi}{2}-\alpha-\varphi_\ell}^{\frac{\pi}{2}+\alpha-\varphi_\ell} \frac{e^{\jmath \frac{2\pi}{c}f_k R \left[ \cos\left( \theta-\varphi_\ell \right) - \cos\left( \theta-\varphi \right) \right]  }}{ r_{\ell}}~{\rm d}t\\
		&\overset{(c)}{\approx} \int_{-\alpha-\varphi_\ell}^{\alpha-\varphi_\ell} \frac{e^{-\jmath \frac{2\pi}{c}f_k R \Delta\varphi \cos\left( t-\varphi_\ell \right) }}{ r_{\ell}}~{\rm d}t,
	\end{aligned}
	\label{eq:angle2}
\end{equation}
where $\Delta\varphi = \varphi-\varphi_\ell$, and the approximation $(c)$ holds when 
$\vert\Delta\varphi\vert$ is small. To estimate the distance of adjacent zero-crossing points, we can choose to utilize the distance between adjacent local maxima and minima as a substitute. Hence, we can set the angular span in the phase term of~\eqref{eq:angle2} to be the integer multiple of $2\pi$. Particularly, for $\varphi_\ell = \pi/2$ case\footnote{For cases when $\varphi_\ell\neq\pi/2$, the range of $\cos\left( t-\varphi_\ell \right)$ lacks monotonicity in the considered integral domain, which only slightly deteriorates the lobe widths with practical system parameters.}, we have 
\begin{equation}
	\Delta\varphi = \frac{m\lambda_k}{2R\sin\alpha},
	\label{eq:lobewidth}
\end{equation}
where $\lambda_k = c/f_k$ denotes the wavelength of the $k$-th subcarrier, and integer $m$ is the index of the local maxima and minima. With $m=1$, the resolution of the angular pattern, represented by the width of the main lobe, is proportional to the wavelength and inversely proportional to the radius $R$, and the sine value of the sector span $\sin\alpha$.

\subsection{Distance Resolution}
We then determine the distance resolution for the proposed localization algorithm. Similar to~\eqref{eq:angle}, the distance reconstruction pattern at $\varphi = \varphi_\ell$ is given by

\begingroup
\allowdisplaybreaks
\begin{align}
		&\left\vert\hat{F}_k^{(\ell)}\left(r \mid r_{\ell},\varphi = \varphi_\ell\right)\right\vert \notag\\
		={}&\left\vert\int_{\frac{\pi}{2}-\alpha}^{\frac{\pi}{2}+\alpha} \frac{e^{\jmath \frac{2\pi}{c}f_k\left( r_\ell+\frac{R^2}{2r_\ell} - r-\frac{R^2}{2r} \right)  }}{ r_\ell +\frac{R^2}{2 r_\ell}-R\cos\left( \theta-\varphi_\ell \right) }~{\rm d}\theta\right\vert \notag\\
		={}&\left.\frac{2{\rm arctanh}\left[ \sqrt{\frac{\left( R+r_\ell \right)^2+r_\ell^2}{{\left( R-r_\ell \right)^2+r_\ell^2}}} \tan\left(\frac{\varphi_\ell-\theta}{2} \right) \right]}{\sqrt{r_\ell^2+\frac{R^2}{4r_\ell^4}}}\right\vert_{\frac{\pi}{2}-\alpha}^{\frac{\pi}{2}+\alpha}=C,
\end{align}
\endgroup
which is a constant. Hence, different from linear arrays, the sUCA does not show distance resolution with one single subcarrier. 
Thanks to the OFDM waveform, we can utilize $K$ subcarriers with spacing $f_{\rm SCS}$ to improve the distance resolution. By summing the distance patterns calculated on all $K$ subcarriers, we can obtain
\begingroup
\allowdisplaybreaks
\begin{align}
		\left\vert \sum_{k=1}^{K}\hat{F}_k^{(\ell)}\left(r \mid r_\ell,\varphi = \varphi_\ell\right) \right\vert&=\left\vert C\sum_{k=1}^{K} e^{\jmath \frac{2\pi}{c}f_k\left( r_\ell+\frac{R^2}{2r_\ell} - r-\frac{R^2}{2r} \right)  } \right\vert\notag\\
		&=\left\vert C \frac{\sin\left( \frac{\pi f_{\rm SCS}}{c} K\Delta r \right)}{\sin\left( \frac{\pi f_{\rm SCS}}{c}\Delta r \right)}\right\vert,
	\label{eq:lobedis}
\end{align}
\endgroup
where $\Delta r = r_\ell+\frac{R^2}{2r_\ell} - r-\frac{R^2}{2r}$. Therefore, the width of the main lobe is $2c/(Kf_{\rm SCS}) = 2c/f_{\rm BW}$. 

\subsection{Minimum Number of Antennas}
\begin{figure}
	\centering
	\includegraphics[width=0.4\textwidth]{./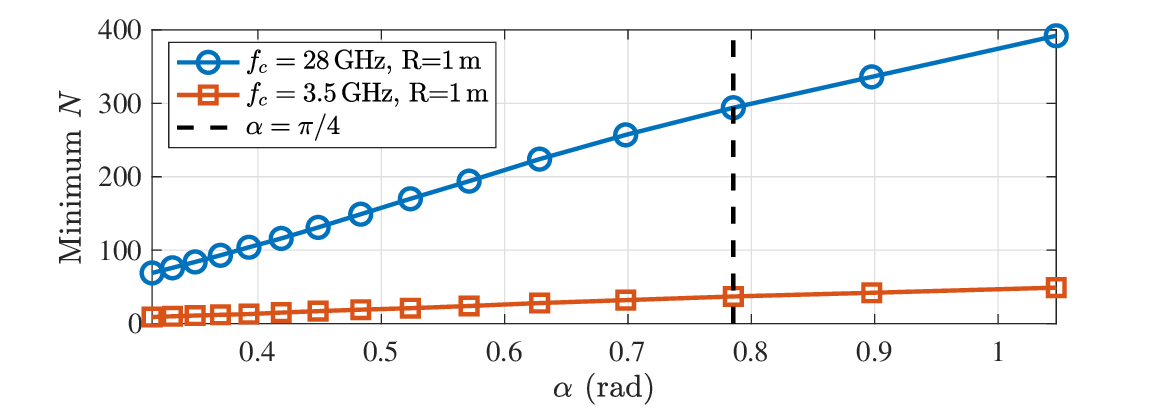}
	\caption{Minimum number of antennas according to~\eqref{eq:minantenna} for $\pi/4\leq \alpha < \pi/2$ and~\eqref{eq:minantenna2} for $0< \alpha < \pi/2$.}
	\label{fig:ant}
	\vspace{-3mm}
\end{figure}
As mentioned in Section~\ref{sec:ang}, the approximated integral form in~\eqref{eq:angle} shows negligible approximation error as $N\rightarrow \infty$, 
while practically it is required to find the minimum number of antennas for lower hardware complexity and costs. Similar to ULAs, to avoid grating lobes with the sUCA antenna configuration, the largest phase difference between the EM waves received at two adjacent antenna elements should be smaller than $\pi$~\cite{10.5555/1111206}. For the typical case when $\pi/4<\alpha<\pi/2$, (i.e., the array sector covers a region larger than $\pi/2$ radians but no more than $\pi$), we need to ensure 
\begin{equation}
	\sin\frac{\alpha}{N} \leq \frac{\lambda}{4R}.
	\label{eq:case1}
\end{equation}
Substituting $\sin ({\alpha}/{N})\simeq \alpha/N$ (as $\alpha\ll N$) into~\eqref{eq:case1}, the minimum number of antennas should satisfy
\begin{equation}
	N\geq \left\lceil \frac{4\alpha R}{\lambda}\right\rceil,
	\label{eq:minantenna}
\end{equation}
which is closely related to the reciprocal of the lobe width. On the contrary, when we have $\alpha<\pi/4$, the minimum antenna spacing similarly needs to satisfy $R\cos\left( 2\alpha -2\alpha/N \right)-R\cos\left( 2\alpha \right)\leq \lambda /2$, which further yields
\begin{equation}
	N\geq \left\lceil\frac{2\alpha}{2\alpha - \arccos\left( \frac{\lambda}{2R}+\cos2\alpha \right)}\right\rceil.
	\label{eq:minantenna2}
\end{equation}
Fig.~\ref{fig:ant} illustrates the minimum number of antennas $N$ with a typical array radius $R=1\,$m and carrier frequencies $f_c\in\{3.5,28 \}\,$GHz.


\section{Low-Complexity Implementation}

The proposed algorithm can be reformulated using convolution operations and implemented by FFT to significantly reduce the computational complexity. In particular, \eqref{eq:tiori} can be written as
\begin{equation}
	\begin{aligned}
		&F_{k}\left(\varphi\mid r, r_{\ell},\varphi_\ell\right)
		= \sum_{n=1}^{N} {\bf y}_k[n]s_k^* e^{-j\frac{2\pi}{c}f_k d_n^\prime}\\
		={}& \sum_{n=1}^{N} \overline{\bf y}_k[n] e^{-j\frac{2\pi}{c}f_k \sqrt{   r^2+R^2-2Rr \cos\left( \vartheta_n - \varphi \right)}}.
		\label{eq:conv1}
	\end{aligned}
\end{equation}
Note that for a given distance ${r}=r^{(j)}$, which corresponds to the $j$-th distance grid out of a total of $G_d$ grids, \eqref{eq:conv1} can be reformulated as the linear convolution between vector $\overline{\bf y}_k = {\bf y}_k s_k^{*} \in\mathbb{C}^{N\times 1}$ and 
\begin{equation}
	\begin{aligned}
		\overline{\bf e}_{r^{(j)}}  &= \left[ e^{\jmath \frac{2\pi f_k}{c}\sqrt{ \tilde{r}-2Rr \cos\left( \varphi^{(1)} \right)}}, \cdots,\right.\\ 
		&\quad\quad~~\left.e^{\jmath \frac{2\pi f_k}{c}\sqrt{ \tilde{r} - 2Rr \cos\left( \varphi^{(G_a)} \right)}} \right],
	\end{aligned}
\end{equation}
where $\tilde{r} = (r^{(j)})^2 +R^2$, and $G_a>N$ is the number of uniform angle grids. The linear convolution between $\overline{{\bf y}}_k$ and $\overline{\bf e}_{r^{(j)}}$ can be further implemented by FFT as
\begin{equation}
	\boldsymbol{\Theta}_k[:,j]= \mathcal{F}_{G_a^\prime}^{-1}\left[  \mathcal{F}_{G_a^\prime}\left[ \overline{\bf y}_k \right]\odot \mathcal{F}_{G_a^\prime}\left[ \overline{\bf e}_{r^{(j)}} \right]  \right]_{N:G_a^\prime-N+1},
	\label{eq:fft}
\end{equation}
where $\mathcal{F}_a\left[ \cdot \right]$ and $\mathcal{F}^{-1}_a\left[ \cdot \right]$ denote FFT and inverse FFT (IFFT) operations with padding (or trimming) length $2^{\lceil \log a \rceil}$, respectively, and $G_a^\prime = G_a+2N$ is used to compensate for the edge effects to ensure at least $G_a$ valid points after convolution. By employing FFT in the convolution computation, the complexity can be reduced from $\mathcal{O}\left( NG_a \right)$ to $\mathcal{O}\left( P\log P \right)$, where $P = 2^{\lceil \log G_a \rceil}$ is the nearest power-$2$ number that is greater than $G_a$.
With the low-complexity implementation, the proposed localization algorithm is summarized in \textbf{Algorithm~\ref{alg:ti}}.

\begin{algorithm}[!t]
	\caption{Low-Complexity Implementation}\label{alg:ti}
	\begin{algorithmic}[1]
		\REQUIRE Received signal $\{{\bf y}_k \}_{k=1}^K$, reconstruction grids $G_a$ on the angular domain and $G_d$ on the distance domain, the number of targets $L$ to be detected\footnotemark, array sector span $\alpha$, and distance grid range $[r_{\rm min},r_{\rm max}]$.
		\ENSURE The estimated coordinates of the UE and scatterers.
		\FOR{$k = 1,\cdots,K$}
		\STATE Initialize empty matrix $\boldsymbol{\Theta}_k\in\mathbb{R}^{G_a\times G_d}$.
		\FOR{$j = 1,\cdots, G_d$}
		\STATE Fill the $j$-th column of $\boldsymbol{\Theta}_k$ according to~\eqref{eq:fft}.
		\ENDFOR
		\STATE Normalize $\boldsymbol{\Theta}_k$ with
		\vspace{-2mm}
		\begin{equation}
			\boldsymbol{\Theta}_k:=\boldsymbol{\Theta}_k/\max\left( \left\vert\boldsymbol{\Theta}_k\right\vert \right).\notag
			\vspace{-2mm}
		\end{equation}
		\ENDFOR
		\STATE Find top $L+1$ indices of peaks  $\{i_\ell\}_{\ell=0}^L$ from 
		\vspace{-2mm}
		\begin{equation}
			\boldsymbol{\Theta}_a = \sum_{k=1}^{K}\sum_{j=1}^{G_d}\left\vert \boldsymbol{\Theta}_k [:,k]\right\vert.\notag
			\vspace{-2mm}
		\end{equation}
		\STATE Find peak on distance grids from
		\vspace{-2mm}
		\begin{equation}
			j_\ell = \arg\underset{j}{\max} \left\vert\sum_{k=1}^{K}  \boldsymbol{\Theta}_k[i_\ell,j] \right\vert.\notag
			\vspace{-2mm}
		\end{equation}
		\STATE Angle estimations $\hat{\varphi}_\ell = \pi/2-\alpha+2\alpha i_\ell/G_a$.
		\STATE Distance estimations $\hat{r}_\ell = r_{\rm min}+(r_{\rm max}-r_{\rm min})j_\ell/G_d$.
	\end{algorithmic}
\end{algorithm}

\section{Simulation Results}
In this section, we evaluate the accuracy and the computational complexity of the proposed localization method via numerical simulations. The accuracy is measured by the localization error defined by,
\begin{equation}
	r_{\rm err} = \frac{1}{L+1}\sum_{\ell=0}^{L}\sqrt{ \hat{r}^2_\ell+r_\ell^2 - 2\hat{r}_\ell r_\ell\cos\left( {\hat \varphi}_\ell - \varphi_\ell \right) } ,
\end{equation} 
where $({\hat r}_\ell,{\hat \varphi}_\ell)$ is an estimation of $({r}_\ell,{\varphi}_\ell)$, while the complexity is evaluated by the number of multiplications involved.
\subsection{Simulation Setup}
\begin{figure}[t]
	\centering
	\includegraphics[width=0.4\textwidth]{./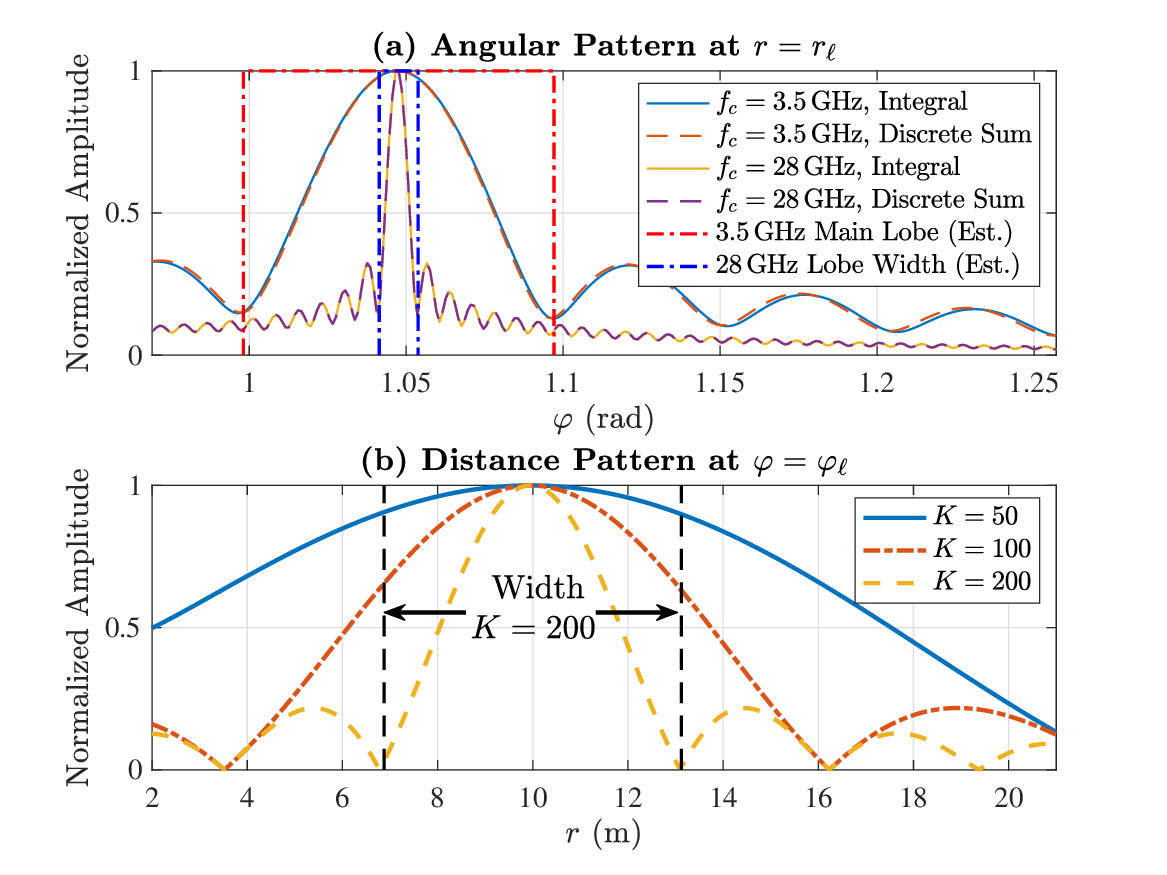}
	\caption{The reconstruction pattern of the proposed algorithm in the (a) angular and (b) distance domains, and the corresponding lobe widths estimated according to~\eqref{eq:lobewidth} and~\eqref{eq:lobedis}, respectively.}
	\label{fig:approx}
	\vspace{-4mm}
\end{figure}

In the simulation, we consider the angle of the sector as $2\alpha = 2\pi/3$, and the radius of the sUCA is set as $R=1\,$m unless otherwise stated. The sUCA at BS is equipped with the minimum number of antennas according to~\eqref{eq:minantenna}. The carrier frequency is set to $f_c=3.5\,$GHz for the sub-$6$G frequency range and $f_c=28\,$GHz for the millimeter-wave frequency range, respectively, and the SCS for OFDM waveforms is set as $f_{\rm SCS} = 480\,$kHz.\footnotetext{To obtain the value of $L$, an empirical threshold can be adopted to identify the peaks in $\boldsymbol{\Theta}$ that are significantly higher than other ripples.}\footnote{According to the technical specifications of 3GPP, SCS up to $960\,$kHz is supported in Release 17~\cite{3gpp.38.300}.} The location coordinate of UE is selected uniformly from $2\,$m to $21\,$m within the $2\alpha$ sector area, which is strictly inside the near-field region.

In both the MUSIC algorithm and the FFT implementation of the proposed algorithm, we set the same reconstruction grids as $G_a = 2N$ and $G_d=100$ in the angular and distance domains, respectively.

\subsection{Numerical Results}

We first visualize the reconstruction pattern in the near-field to verify the theoretical results. The $\ell$-th target is fixed at polar coordinate $(r_\ell,\varphi_\ell)=(10\,{\rm m},\pi/3)$, and the reconstruction patterns are calculated given $r=r_\ell$ and $\varphi = \varphi_\ell$, respectively. 
Fig.~\ref{fig:approx}(a) and (b) show the reconstruction pattern on angular and distance domains, respectively, with normalized amplitudes. Specifically, for the angular domain, we use the solid line to plot the approximated integral form results in~\eqref{eq:angle} and the dashed lines to plot the discrete sum in~\eqref{eq:discrete_form}. The approximation shows negligible error, which confirms the rationality and accuracy of the adopted approximations. Additionally, the dotted-dashed lines plot the estimated lobe widths according to~\eqref{eq:lobewidth}, and the vertical lines reveal the accuracy of the estimated lobe widths with carrier frequencies of $f_c = 3.5\,$GHz and $f_c = 28\,$GHz. The width of the main lobe becomes narrower as the carrier frequency becomes higher, which is consistent with the analytical result.

In Fig.~\ref{fig:approx}, the distance reconstruction pattern is also calculated according to~\eqref{eq:discrete_form}, and the dashed line marks the estimated lobe width according to~\eqref{eq:lobedis} with $K=200$ subcarriers. The simulation results also align with the analysis that the lobe width is inversely proportional to the bandwidth $f_{\rm BW} = Kf_{\rm SCS}$.

\begin{figure}[t]
	\centering
	\includegraphics[width=0.4\textwidth]{./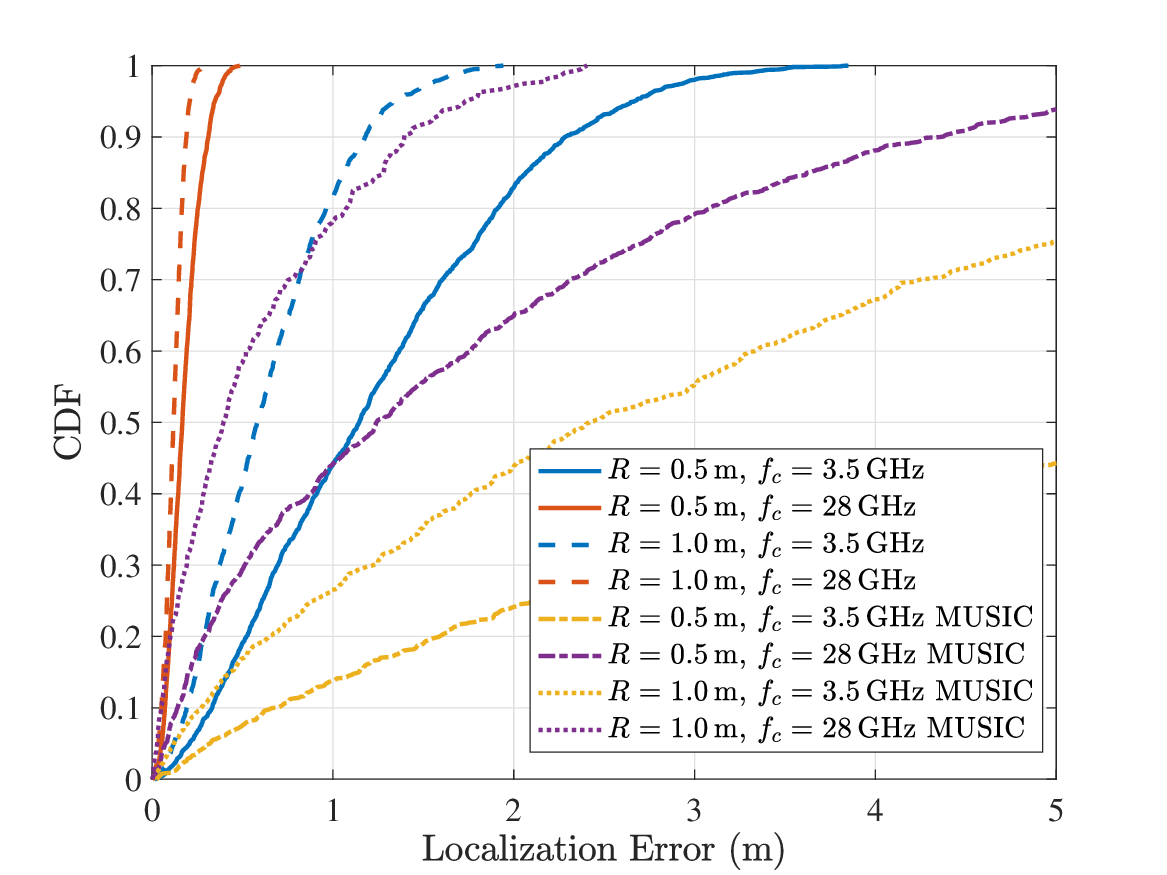}
	\caption{The empirical CDF of the localization error of the proposed method and MUSIC algorithm.}
	\label{fig:cdf}
	\vspace{-4mm}
\end{figure}

The localization accuracy is then investigated with two different array radii $R\in\{0.5\,{\rm m},1\,{\rm m}\}$ in terms of the cumulative distribution function (CDF) of the localization error. As can be observed from Fig.~\ref{fig:cdf}, the localization error is reduced with the increased array radius and carrier frequency, which aligns with the derived resolution~\eqref{eq:lobewidth} in the angular domain. It is worth noting that the MUSIC algorithm performs unsatisfactorily in all scenarios, which can be explained by three-fold reasons. First, the steering matrix of sUCA no longer satisfies Vandermonde form, and therefore localization problems cannot be directly formulated as an eigenvalue decomposition (EVD) problem, for which the MUSIC algorithm excels. Second, since the two DoFs are mutually coupled in the near-field steering vector, estimating either parameter individually will result in significant errors. Finally, the original MUSIC algorithm and some of its variations are not asymptotically consistent estimators when the numbers of antennas and samples increase~\cite{4400832}. In contrast, by emulating the reverse transmission of the EM waves, our proposed backprojection algorithm for sUCA-enabled localization shows a substantial reduction in both mean and standard deviation of localization errors.

The computational complexity in terms of the number of multiplications and average runtime is further evaluated in Table~\ref{tab:loc}. The average runtime is simulated in $f_c\in\{3.5,28\}\,$GHz scenarios, which correspond to $N=49$ and $N=392$ antennas, respectively. 
Given that $G_a$ is proportional to the number of antennas $N$ in the simulation setup, the number of multiplications increases quartically with $N$, while the proposed method increases quadratically. Consequently, the proposed method achieves up to a hundred times faster than the MUSIC algorithm with $f_c = 28\,$ GHz carrier frequency and $N = 392$ antenna elements.

\begin{table}[t]
	\setlength{\tabcolsep}{2pt}
	\centering
	\caption{Computational Complexities}
	\vspace{-2mm}
	\label{tab:loc}
	\begin{tabular}{c|c|cc}
		\hline\hline
		\multirow{2}{*}{} &  \multirow{2}{*}{Number of Multiplications} & \multicolumn{2}{c}{Avg. Runtime (s)} \\ \cline{3-4} 
		&                        & $3.5\,$GHz      & $28\,$GHz      \\ \hline
		MUSIC                                 & \makecell{$KN^2+N^3+$\\$G_aG_d\left( (N-L+1)N^2+N \right)$}                         &   $0.5071$                          &  $58.8952$                                \\ \hline
		Proposed                         &  $KG_d(G_a+1)N$    &   $0.5963$                           &  $0.7995$                              \\ \hline
		Proposed (FFT)         &  $KG_d\lceil \log_2 G_a \rceil 2^{\lceil \log_2 G_a \rceil}$      & $\bf 0.0787$              &  $\bf 0.5879$                   \\ \hline\hline
	\end{tabular}
\end{table}
\begin{figure}
	\centering
	\includegraphics[width=0.4\textwidth]{./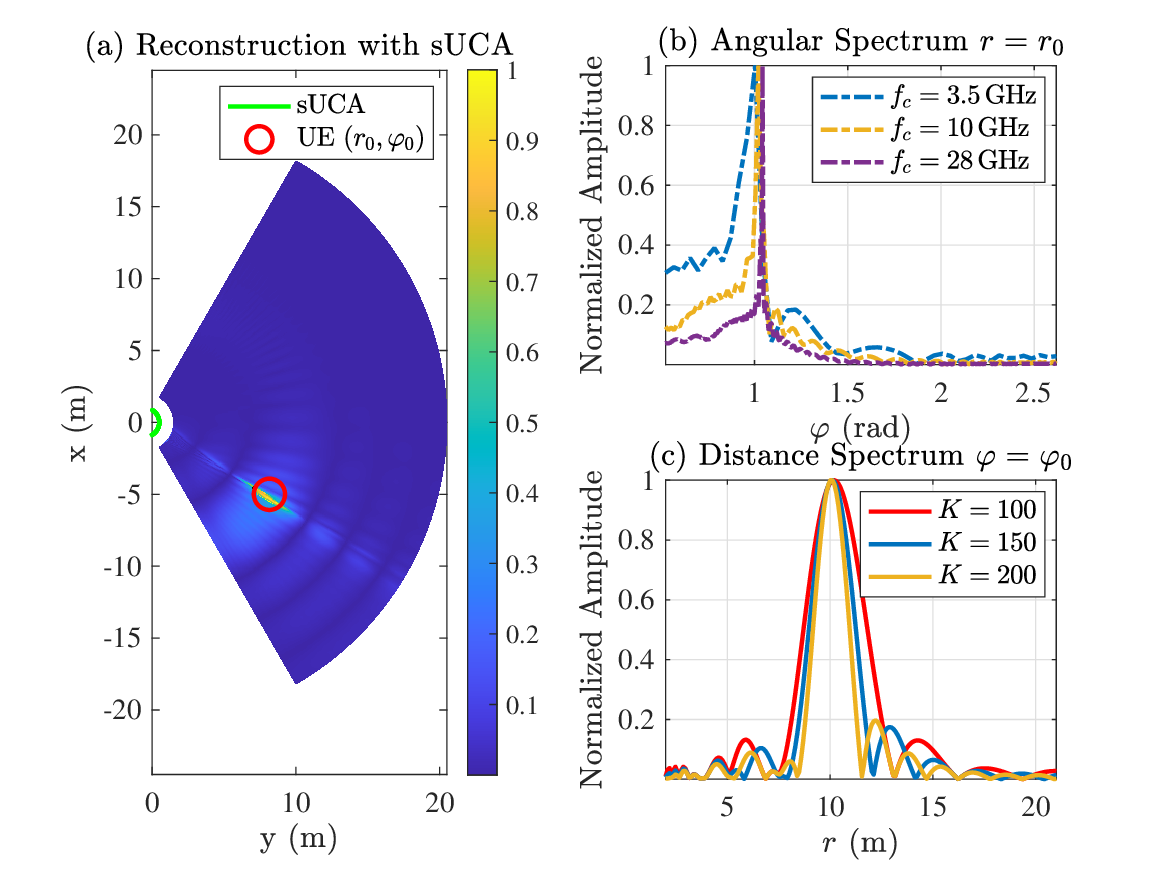}
	\caption{(a) The reconstructed near-field environment with a single UE via FFT implementation, (b) the angular domain reconstruction pattern at $r = r_0$, and (c) the distance domain reconstruction pattern at $\varphi=\varphi_0$.}
	\label{fig:suca}
	\vspace{-4mm}
\end{figure}

We finally investigate the performance of the proposed algorithm implemented by FFT. For demonstration purposes, we set the carrier frequency as $f_c \in\{3.5, 10, 28\}\,$GHz, and consider up to $K=200$ subcarriers. Fig.~\ref{fig:suca}(a) shows the reconstructed near-field region in a Cartesian coordinate system, where the target can be accurately localized by the proposed method. We further plot the reconstruction results in both angular and distance domains, respectively, in Figs.~\ref{fig:suca}(b) and (c) when $r=r_0$ and $\varphi=\varphi_0$. Since the FFT implementation takes finite grids $G_a$ in the angular domain, the oscillation, as previouly observed in Fig.~\ref{fig:approx}(a) in the angular domain, becomes insignificant in Fig.~\ref{fig:suca}(b), while the widths of the main lobes still shrink inversely proportional to the increase of carrier frequency as expected. Similarly, in the distance domain, as depicted in Fig.~\ref{fig:suca}(c), the lobe width also shrinks as the number of subcarriers increases, indicating that wider bandwidth results in higher distance resolution.

\section{Conclusions}

In this paper, we addressed the localization issue in the near-field region with a sUCA antenna configuration by resorting to the backprojection algorithm. We demonstrated that the physical architecture of sUCA enables the proposed method to operate directly in the polar coordinate system, which promotes efficient and low-complexity localization tasks. We further revealed the resolution of the proposed method in both angular and distance domains. Specifically, the angular resolution is inversely proportional to the carrier frequency, the radius of sUCA, and the sine value of the sector angle span, while the distance resolution is inversely proportional to the signal bandwidth. We finally implemented the proposed method by FFT, which achieves up to a hundred times faster than the MUSIC algorithm in XL-MIMO OFDM systems.

\bibliographystyle{IEEEtran}
\bibliography{IEEEabrv,references}

\begin{thebibliography}{10}
\providecommand{\url}[1]{#1}
\csname url@samestyle\endcsname
\providecommand{\newblock}{\relax}
\providecommand{\bibinfo}[2]{#2}
\providecommand{\BIBentrySTDinterwordspacing}{\spaceskip=0pt\relax}
\providecommand{\BIBentryALTinterwordstretchfactor}{4}
\providecommand{\BIBentryALTinterwordspacing}{\spaceskip=\fontdimen2\font plus
\BIBentryALTinterwordstretchfactor\fontdimen3\font minus
  \fontdimen4\font\relax}
\providecommand{\BIBforeignlanguage}[2]{{%
\expandafter\ifx\csname l@#1\endcsname\relax
\typeout{** WARNING: IEEEtran.bst: No hyphenation pattern has been}%
\typeout{** loaded for the language `#1'. Using the pattern for}%
\typeout{** the default language instead.}%
\else
\language=\csname l@#1\endcsname
\fi
#2}}
\providecommand{\BIBdecl}{\relax}
\BIBdecl
\renewcommand{\BIBentryALTinterwordstretchfactor}{4}

\bibitem{10068140}
H.~Zhang, N.~Shlezinger, F.~Guidi, D.~Dardari, and Y.~C. Eldar, ``6{G} wireless
  communications: {F}rom far-field beam steering to near-field beam focusing,''
  \emph{{IEEE} Commun. Mag.}, vol.~61, no.~4, pp. 72--77, Apr. 2023.

\bibitem{10379539}
Z.~Wang \emph{et~al.}, ``A tutorial on extremely large-scale {MIMO} for 6{G}:
  {F}undamentals, signal processing, and applications,'' \emph{{IEEE} Commun.
  Surveys Tuts.}, 2024, {\it to appear}.

\bibitem{10286475}
L.~Qiao \emph{et~al.}, ``Sensing user's activity, channel, and location with
  near-field extra-large-scale {MIMO},'' \emph{{IEEE} Trans. Commun.}, vol.~72,
  no.~2, pp. 890--906, 2024.

\bibitem{ren2023sensingassisted}
\BIBentryALTinterwordspacing
Z.~Ren, L.~Qiu, J.~Xu, and D.~W.~K. Ng, ``Sensing-assisted sparse channel
  recovery for massive antenna systems,'' {\em arXiv preprint}, Nov. 2023.
  [Online]. Available: \url{https://arxiv.org/abs/2311.05907}
\BIBentrySTDinterwordspacing

\bibitem{10149471}
Y.~Pan, C.~Pan, S.~Jin, and J.~Wang, ``{RIS}-aided near-field localization and
  channel estimation for the {T}erahertz system,'' \emph{{IEEE} J. Sel. Topics
  Signal Process.}, vol.~17, no.~4, pp. 878--892, July 2023.

\bibitem{9707730}
A.~Sakhnini, S.~De~Bast, M.~Guenach, A.~Bourdoux, H.~Sahli, and S.~Pollin,
  ``Near-field coherent radar sensing using a massive {MIMO} communication
  testbed,'' \emph{{IEEE} Trans. Wireless Commun.}, vol.~21, no.~8, pp.
  6256--6270, Aug. 2022.

\bibitem{1589932}
P.~Ioannides and C.~Balanis, ``Uniform circular arrays for smart antennas,''
  \emph{{IEEE} Antennas Propag. Mag.}, vol.~47, no.~4, pp. 192--206, Aug. 2005.

\bibitem{9520393}
T.~Liang, M.~Zhu, and F.~Pan, ``A {DOA} estimation method for uniform circular
  array based on virtual interpolation and subarray rotation,'' \emph{{IEEE}
  Access}, vol.~9, pp. 116\,760--116\,767, Aug. 2021.

\bibitem{10005200}
Y.~Xie, B.~Ning, L.~Li, and Z.~Chen, ``Near-field beam training in {TH}z
  communications: {T}he merits of uniform circular array,'' \emph{{IEEE}
  Wireless Commun. Lett.}, vol.~12, no.~4, pp. 575--579, Apr. 2023.

\bibitem{10243590}
Z.~Wu, M.~Cui, and L.~Dai, ``Enabling more users to benefit from near-field
  communications: From linear to circular array,'' \emph{{IEEE} Trans. Wireless
  Commun.}, vol.~23, no.~4, pp. 3735--3748, Apr. 2024.

\bibitem{8755476}
F.~Zhang and W.~Fan, ``Near-field ultra-wideband mm{W}ave channel
  characterization using successive cancellation beamspace {UCA} algorithm,''
  \emph{{IEEE} Trans. Veh. Technol.}, vol.~68, no.~8, pp. 7248--7259, Aug.
  2019.

\bibitem{liu2024sensingenhanced}
S.~Liu, X.~Yu, Z.~Gao, J.~Xu, D.~W.~K. Ng, and S.~Cui, ``Sensing-enhanced
  channel estimation for near-field {XL-MIMO} systems,'' {\em arXiv preprint},
  Mar. 2024.

\bibitem{10.5555/1111206}
D.~Tse and P.~Viswanath, \emph{Fundamentals of wireless communication}.\hskip
  1em plus 0.5em minus 0.4em\relax USA: Cambridge University Press, 2005.

\bibitem{9536436}
N.~J. Myers and R.~W. Heath, ``Infocus: {A} spatial coding technique to
  mitigate misfocus in near-field {LoS} beamforming,'' \emph{{IEEE} Trans.
  Wireless Commun.}, vol.~21, no.~4, pp. 2193--2209, Apr. 2022.

\bibitem{3gpp.38.300}
{3GPP}, ``{NR}; {NR} and {NG-RAN} overall description,'' TS 38.300 v17.7.0,
  Jan. 2024.

\bibitem{4400832}
X.~Mestre and M.~A. Lagunas, ``Modified subspace algorithms for {DoA}
  estimation with large arrays,'' \emph{{IEEE} Trans. Signal Process.},
  vol.~56, no.~2, pp. 598--614, Feb. 2008.

\end{thebibliography}

%

\end{document}